\newtheorem{mydef}{Definition}
\newtheorem{myprop}[mydef]{Proposition}
\newcommand{\states}{\textup{States}}
\newcommand{\agt}{\textup{Agt}}
\newcommand{\act}{\textup{Act}}
\newcommand{\mov}{\textup{Mov}}
\newcommand{\tab}{\textup{Tab}}
\newcommand{\prop}{\textup{Prop}}
\newsavebox{\@brx}
\newcommand{\llangle}[1][]{\savebox{\@brx}{\(\m@th{#1\langle}\)}%
  \mathopen{\copy\@brx\kern-0.5\wd\@brx\usebox{\@brx}}}
\newcommand{\rrangle}[1][]{\savebox{\@brx}{\(\m@th{#1\rangle}\)}%
  \mathclose{\copy\@brx\kern-0.5\wd\@brx\usebox{\@brx}}}
\title{Alternating-time temporal logic with finite-memory strategies}
\author{Steen Vester
\institute{DTU Compute\\
Technical University of Denmark}
\institute{}
\email{stve@dtu.dk}
}
\begin{document}
\maketitle

\begin{abstract}

Model-checking the alternating-time temporal logics $ATL$ and $ATL^*$ with incomplete information is undecidable for perfect recall semantics. However, when restricting to memoryless strategies the model-checking problem becomes decidable. In this paper we consider two other types of semantics based on finite-memory strategies. One where the memory size allowed is bounded and one where the memory size is unbounded (but must be finite). This is motivated by the high complexity of model-checking with perfect recall semantics and the severe limitations of memoryless strategies. We show that both types of semantics introduced are different from perfect recall and memoryless semantics and next focus on the decidability and complexity of model-checking in both complete and incomplete information games for $ATL/ATL^*$. In particular, we show that the complexity of model-checking with bounded-memory semantics is $\Delta_2^p$-complete for $ATL$ and $PSPACE$-complete for $ATL^*$ in incomplete information games just as in the memoryless case. We also present a proof that $ATL$ and $ATL^*$ model-checking is undecidable for $n \ge 3$ players with finite-memory semantics in incomplete information games.

\end{abstract}

\section{Introduction}

The alternating-time temporal logics $ATL$ and $ATL^*$ have been studied with perfect recall semantics and memoryless semantics in both complete and incomplete information concurrent game structures \cite{AHK97, AHK02, Sch04}. The model-checking problems for these logics have applications in verification and synthesis of computing systems in which different entities interact. The complexity of model-checking with perfect recall semantics, where players are allowed to use an infinite amount of memory, is very high in some cases and even undecidable in the case of $ATL$ \cite{AHK02, DT11} with incomplete information. On the other hand, model-checking with memoryless semantics, where players are not allowed to use any memory about the history of a game, is decidable and has a much lower complexity \cite{Sch04}. The drawback is that there are many games where winning strategies exist for some coalition, but where no memoryless winning strategies exist. In this paper, we focus on the tradeoff between complexity and strategic ability with respect to the memory available to the players. Instead of considering the extreme cases of memoryless strategies and infinite memory strategies we look at finite-memory strategies as an intermediate case of the two. The motivation is the possibility to solve more games than with memoryless strategies, but without the cost that comes with infinite memory.

We introduce two new types of semantics called bounded-memory semantics and finite-memory semantics respectively. For bounded-memory semantics there is a bound on the amount of memory available to the players, whereas for finite-memory semantics players can use any finite amount of memory. We will study the expressiveness of these new types of semantics compared to memoryless and perfect recall semantics in $ATL$ and $ATL^*$ with both complete and incomplete information. Afterwards we focus on the complexity and decidability of the model-checking problem for the different cases.

Our approach have similarities with the work done in \cite{Sch04}, \cite{BLLM09} and \cite{AW09}. It is a natural extension of the framework used in \cite{Sch04} where memoryless semantics and perfect recall semantics are considered. In \cite{BLLM09} $ATL/ATL^*$ with bounded-memory semantics and strategy context is introduced for complete information games, where bounded-memory strategies are defined essentially in the same way as here. However, their use of strategy context makes the problems and algorithms considered different from ours. In \cite{AW09} a version with bounded-recall is considered where agents can only remember the last $m$ states of the play. This contrasts our approach where the players can decide what to store in the memory about the past.

\section{Concurrent game structures}

A concurrent game is played on a finite graph by a finite number of players, where the players interact by moving a token between different states along the edges of the graph. The game is played an infinite number of rounds where each round is played by letting every player independently and concurrently choose an action. The combination of actions chosen by the players along with the current state uniquely determines the successor state of the game. More formally,
\\
\begin{mydef}

A concurrent game structure (CGS) with $n$ players

$$\mathcal{G} = (\states, \agt, \act, \mov, \tab)$$
consists of

\begin{itemize}

\item \states \hspace{0.04cm} - A finite non-empty set of states

\item $\agt = \{1,...,n\}$ - A finite non-empty set of players

\item \act \hspace{0.04cm} - A finite non-empty set of actions

\item $\mov: \states \times \agt \rightarrow 2^\act \setminus \{\emptyset\}$ - A function specifying the legal actions at a given state of a given player

\item $\tab: \states \times \act^n \rightarrow \states$ - A transition function defined for each $(a_1,...,a_n) \in \act^n$ and state $s$ such that $a_j \in \mov(s,j)$ for $1 \le j \le n$

\end{itemize}

\end{mydef}

Unless otherwise noted, we implicitly assume from now on that the players in a game are named $1,...,n$ where $n = |\agt|$. Note that every player must have at least one legal action in each state. The transition function $\tab$ is defined for each state and all legal tuples of actions in that state. We also refer to such legal tuples of actions as moves. To add meaning to concurrent game structures we introduce the concept of a concurrent game model which consists of a concurrent game structure as well as a labeling of the states in the structure with propositions from some fixed, finite set $\textup{Prop}$ of proposition symbols.
\\
\begin{mydef}

A concurrent game model (CGM) is a pair $(\mathcal{G}, \pi)$ where $\mathcal{G}$ is a concurrent game structure and $\pi: \states \rightarrow \mathcal{P}(\prop)$ is a labeling function.

\end{mydef}

An example of a CGM can be seen in Figure \ref{fig:gamem1}, where the states are drawn as nodes. Transitions are drawn as edges between nodes such that there is an edge from $s$ to $s'$ labeled with the move $(a_1,...,a_n)$ if $\tab(s,(a_1,...,a_n)) = s'$. The states are labelled with propositions from the set $\prop = \{p,q\}$ in the figure.

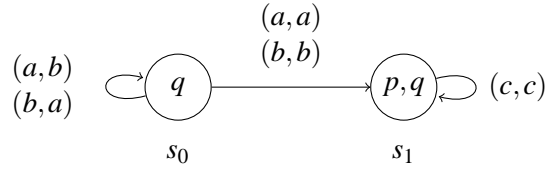
\begin{figure}[here]
\begin{center}
\begin{tikzpicture}

\tikzstyle{every node}=[ellipse, draw=black,
                        inner sep=0pt, minimum width=25pt, minimum height=25pt]

\draw (2,5) node [label=below:$s_0$] (s0)	{$q$};
\draw (5,5) node [label=below:$s_1$] (s1)	{$p,q$};

\path[->] (s0) edge node [above, draw=none] {$\begin{array}{c}
 (a,a) \\
 (b,b) \\
       \end{array}$ } (s1);
\path[->] (s1) edge [loop right] node [right, draw=none] {$(c,c)$ } (s1);

\path[->] (s0) edge [loop left] node [left, draw=none] {$\begin{array}{c}
 (a,b) \\
 (b,a) \\
       \end{array}$ } (s0);

\end{tikzpicture}
\end{center}
\caption{CGM $\mathcal{M}$}
\label{fig:gamem1}

\end{figure}

We define an incomplete information concurrent game structure as a CGS where each player $j$ has an equivalence relation $\sim_j$ on the set of states. The intuitive meaning is that $s \sim_j s'$ if player $j$ cannot distinguish between the states $s$ and $s'$.
\\
\begin{mydef}

A concurrent game structure with incomplete information (iCGS) with $n$ players is a tuple

$$\mathcal{G} = (\states, \agt, \act, \mov, \tab, (\sim_j)_{1 \le j \le n})$$
where 

\begin{itemize}

\item $(\states, \agt, \act, \mov, \tab)$ is a CGS

\item $\sim_j \subseteq \states \times \states$ is an equivalence relation for all $1 \le j \le n$

\item If $s \sim_j s'$ then $\mov(s,j) = \mov(s',j)$ for all $s,s' \in \states$ and all $j \in \agt$

\end{itemize}

\end{mydef}

Note that we require the set of actions available to a player in two indistinguishable states to be the same. We extend the notion to concurrent game models with incomplete information in the natural way.
\\
\begin{mydef}

A concurrent game model with incomplete information (iCGM) is a pair $(\mathcal{G}, \pi)$ where $\mathcal{G}$ is an iCGS and $\pi: \states \rightarrow 2^{\prop}$ is a labeling function.

\end{mydef}

For each player $j$, the relation $\sim_j$ induces a set $[\cdot]_j$ of equivalence classes of states. We denote by $[s]_j$ the class that state $s$ belongs to for player $j$. These classes are refered to as the observation sets of player $j$. Since the set of legal actions of player $j$ is required to be the same in states from the same observation set, we can define $\mov([s]_j,j) = \mov(s,j)$ for all states $s$. Note that the concepts of iCGS and iCGM generalize CGS and CGM respectively, since they are the special cases where $\sim_j$ is the identity relation for all players $j$.

\section{Outcomes, histories and strategies}

Let $\mathcal{G} = (\states, \agt, \act, \mov, \tab)$ be a CGS with $n$ players. An outcome (or play) of a concurrent game is an infinite sequence of states in the game structure that corresponds to an infinite sequence of legal moves. Formally, the set of outcomes $\textup{Out}_\mathcal{G}(s)$ of $\mathcal{G}$ from $s \in \states$ is defined as

$$\textup{Out}_\mathcal{G}(s) = \{\rho_0 \rho_1 ... \in \states^\omega \mid \rho_0 = s \wedge \forall j \ge 0 . \exists m \in \act^n . \tab(\rho_j,m) = \rho_{j+1}\}$$

$\textup{Out}_\mathcal{G} = \bigcup_{s \in \states} \textup{Out}_\mathcal{G}(s)$ is the set of all outcomes of $\mathcal{G}$. A history of a concurrent game is a non-empty, finite prefix of an outcome. The set $\textup{Hist}_\mathcal{G}(s)$ of histories of $\mathcal{G}$ from $s \in \states$ is defined as

$$\textup{Hist}_\mathcal{G}(s) = \{\rho_0 \rho_1 ... \rho_k \in \states^+ \mid \rho_0 = s \wedge \exists \rho' \in \textup{Out}_\mathcal{G}(s) . \forall 0 \le j \le k . \rho_j = \rho'_j \}$$

$\textup{Hist}_\mathcal{G} = \bigcup_{s \in \states} \textup{Hist}_\mathcal{G}(s)$ is the set of all histories of $\mathcal{G}$. For a (finite or infinite) sequence $\rho$ of states we write $\rho_0$ for the first state, $\rho_j$ for the $(j+1)$th state. $\rho_{\le j}$ is the prefix $\rho_0 \rho_1 ... \rho_j$ of $\rho$ and $\rho_{\ge j}$ is the suffix $\rho_{j} \rho_{j+1} ...$ of $\rho$. When $\rho = \rho_0 ... \rho_k$ is a finite sequence we denote the length of $\rho$ by $|\rho| = k$ and write $\textup{last}(\rho) = \rho_k$.

For a given CGS $\mathcal{G} = (\states, \agt, \act, \mov, \tab)$ we define a strategy for player $j$ as a mapping $\sigma_j: \textup{Hist}_\mathcal{G} \rightarrow \act$ such that for all $h \in \textup{Hist}_\mathcal{G}$ we have $\sigma_j(h) \in \mov(\textup{last}(h),j)$. Thus, a strategy for player $j$ maps any given history to an action that is legal for player $j$ in the final state of the history. We will also refer to these strategies as perfect recall strategies or infinite-memory strategies, since a player using such a strategy can use the entire history of a play up to the decision point to choose his next action. A memoryless (positional, no recall) strategy for player $j$ is a strategy $\sigma_j$ such that for all $h, h' \in \textup{Hist}_\mathcal{G}$ with $\textup{last}(h) = \textup{last}(h')$ we have $\sigma_j(h) = \sigma_j(h')$. It is called a memoryless strategy since the player is only using the last state of the history to decide on his action. We denote by $\textup{Strat}^R_j$ the set of perfect recall strategies for player $j$ and by $\textup{Strat}^r_j$ the set of memoryless strategies for player $j$. We write $\textup{Out}_\mathcal{G}(s,\sigma)$ for a strategy $\sigma = (\sigma_a)_{a \in \agt}$ for coalition $A$ and a state $s$ to denote the set of possible outcomes from state $s$ when players in coalition $A$ play according to $\sigma$.

Next, we define finite-memory strategies in which a player is only allowed to store a finite amount of memory of the history of the game. He can then combine his memory with the current state of the game to choose an action. To model a strategy with finite memory we use a deterministic finite-state transducer (DFST). A DFST is a 6-tuple $(M,m_0,\Sigma,\Gamma,T,G)$ where $M$ is a finite, non-empty set of states, $m_0$ is the initial state, $\Sigma$ is the input alphabet, $\Gamma$ is the output alphabet, $T: M \times \Sigma \rightarrow M$ is the transition function and $G: M \times \Sigma \rightarrow \Gamma$ is the output function. The set of states of the DFST are the possible values of the internal memory of the strategy. We will also call these memory states. The initial state corresponds to the initial memory value. The input symbols are the states of the game and the set of output symbols is the set of actions of the game. In each round of the game the DFST reads a state of the game. Then it updates its memory based on the current memory value and the input state and performs an action based on the current memory value and the input state. More formally, we say that a strategy $\sigma_j$ for player $j$ is a finite-memory strategy if there exists a DFST $A = (M,m_0,\states,\act,T,G)$ such that for all $h \in \textup{Hist}_\mathcal{G}$ we have

$$\sigma_j(h) = G(\mathcal{T}(m_0,h_{\le |h| - 1}),\textup{last}(h))$$

where $\mathcal{T}$ is defined recursively by $\mathcal{T}(m,\rho) = T(m,\rho_0)$ for any memory state $m$ and any history $\rho$ with $|\rho| = 0$ and $\mathcal{T}(m,\rho) = T(\mathcal{T}(m,\rho_{\leq |\rho| - 1}),\textup{last}(\rho))$ for any memory state $m$ and any history $\rho$ with $|\rho| \ge 1$. Intuitively $\mathcal{T}$ is the function that repeatedly applies the transition function $T$ on a sequence of inputs to calculate the memory state after a given history. We call $\mathcal{T}$ the repeated transition function. We say that $\sigma_j$ is a $k$-memory strategy if the number of states of the DFST is $k$. We also say that the strategy $\sigma_j$ is represented by the DFST $A$. We denote the set of finite-memory strategies for player $j$ by $\textup{Strat}_j^F$ and the set of $k$-memory strategies for player $j$ by $\textup{Strat}_j^{F_k}$. Thus, $\textup{Strat}_j^F = \bigcup_{k \ge 1} \textup{Strat}_j^{F_k}$. In addition, we have that the memoryless strategies are exactly the finite-memory strategies with one memory state, i.e. $\textup{Strat}_j^{F_1} = \textup{Strat}_j^r$.

Next, we generalize the notions of strategies to incomplete information games by defining them on observation histories rather than on histories, since players observe sequences of observation sets during the play rather than sequences of states. We define the set $\textup{Hist}_\mathcal{G}^j$ of observation histories for player $j$ in iCGS $ \mathcal{G}$ as

$$\textup{Hist}_\mathcal{G}^j = \{[s_0]_j [s_1]_j ... [s_k]_j \mid s_0 s_1 ... s_k \in \textup{Hist}_\mathcal{G}\}$$

For each player, a given history induces a particular observation history which is observed by the player. Then, strategies are defined as mappings from observation histories to actions, memoryless strategies are strategies where the same action is chosen for any observation history ending with the same observation set and finite-memory strategies are represented by DFSTs where the input symbols are observation sets rather than states of the game. Note that the definitions coincide for complete information games.

\section{$ATL/ATL^*$ with finite-memory and bounded-memory semantics}

The alternating-time temporal logics $ATL$ and $ATL^*$ generalize the computation tree logics $CTL$ and $CTL^*$ with the strategic operator $\llangle A \rrangle \varphi$ which expresses that coalition $A$ has a strategy to ensure the property $\varphi$. For a fixed, finite set $\agt$ of agents and finite set $\textup{Prop}$ of proposition symbols the $ATL^*$ formulas are constructed from the following grammar

$$\varphi ::= p \mid \neg \varphi_1 \mid \varphi_1 \vee \varphi_2 \mid \textbf{X} \varphi_1 \mid \varphi_1 \textbf{U} \varphi_2 \mid \llangle A \rrangle \varphi_1 $$
where $p \in \textup{Prop}$, $\varphi_1, \varphi_2$ are $ATL^*$ formulas and $A \subseteq \agt$ is a coalition of agents. The connectives $\wedge$,  $\rightarrow$, $\Leftrightarrow$, $\textbf{G}$ and $\textbf{F}$ are defined in the usual way. The universal path quantifier $\textbf{A}$ of computation tree logic can be defined as $\llangle \emptyset \rrangle$. $ATL$ is the subset of $ATL^*$ defined by the following grammar

$$\varphi ::= p \mid \neg \varphi_1 \mid \varphi_1 \vee \varphi_2 \mid \llangle A \rrangle \textbf{X} \varphi_1  \mid \llangle A \rrangle \textbf{G} \varphi_1 \mid \llangle A \rrangle (\varphi_1 \textbf{U} \varphi_2) $$
where $p \in \textup{Prop}$, $\varphi_1, \varphi_2$ are $ATL$ formulas and $A \subseteq \agt$ is a coalition of agents.

We distinguish between state formulas and path formulas, which are evaluated on states and paths of a game respectively. The state formulas are defined as follows

\begin{itemize}

\item $p$ is a state formula if $p \in \textup{Prop}$

\item If $\varphi_1$ and $\varphi_2$ are state formulas, then $\neg \varphi_1$ and $ \varphi_1 \vee \varphi_2$ are state formulas

\item If $\varphi_1$ is an $ATL^*$ formula and $A \subseteq \agt$, then $\llangle A \rrangle \varphi_1$ is a state formula

\end{itemize}
All $ATL^*$ formulas are path formulas. Note that all $ATL$ formulas are state formulas.

In \cite{Sch04} $ATL$ and $ATL^*$ are defined with different semantics based on (1) whether the game is with complete or incomplete information (2) whether perfect recall strategies or only memoryless strategies are allowed. Here $i$ and $I$ are used to denote incomplete and complete information respectively. $r$ and $R$ are used to denote memoryless and perfect recall strategies respectively. We extend this framework by considering finite-memory semantics where only finite-memory strategies are allowed and denote this by $F$. In addition we extend it with an infinite hierarchy of bounded-memory semantics, where $F_k$ for $k \ge 1$ denotes that only $k$-memory strategies are allowed. We denote the satisfaction relations $\models_{XY}$ where $X \in \{i,I\}$ and $Y \in \{r,F_1,F_2,...,F,R\}$. We will also write $ATL_{XY}$ and $ATL^*_{XY}$ to denote the logics obtained with the different types of semantics.

The semantics of formulas in alternating-time temporal logic is given with respect to a fixed CGM $\mathcal{M} = (\mathcal{G}, \pi)$ where the players that appear in the formulas must appear in $\mathcal{G}$ and the propositions present in the formulas are in $\textup{Prop}$. For state formulas we define for all CGMs $\mathcal{M} = (\mathcal{G}, \pi)$, all states $s$, all propositions $p \in \textup{Prop}$, all state formulas $\varphi_1$ and $\varphi_2$, all path formulas $\varphi_3$, all coalitions $A \in \agt$ and all $Y \in \{r,F_1,F_2,...,F,R\}$
\\

\begin{tabular}{l l}
$\mathcal{M},s \models_{IY} p $ & if $p \in \pi(s)$ \\
$\mathcal{M},s \models_{IY} \neg \varphi_1 $ & if $\mathcal{M},s \not\models_{IY} \varphi_1$ \\
$\mathcal{M},s \models_{IY} \varphi_1 \vee \varphi_2 $ & if $\mathcal{M},s \models_{IY} \varphi_1$ or $\mathcal{M},s \models_{IY} \varphi_2$ \\
$\mathcal{M},s \models_{IY} \llangle A \rrangle \varphi_3 $ & if there exist strategies $(\sigma_A)_{a \in A} \in \prod_{a \in A} \textup{Strat}^Y_a$ such that  \\
 & $\forall \rho \in \textup{Out}_\mathcal{G}(s,\sigma_A) . \mathcal{M},\rho \models_{IY} \varphi_3$\\

\end{tabular}
\\

For path formulas we define for all CGMs $\mathcal{M} = (\mathcal{G}, \pi)$, all paths $\rho$, all propositions $p \in \textup{Prop}$, all state formulas $\varphi_1$, all path formulas $\varphi_2$ and $\varphi_3$, all coalitions $A \in \agt$ and all $Y \in \{r,F_1,F_2,...,F,R\}$
\\

\begin{tabular}{l l}
$\mathcal{M},\rho \models_{IY} \varphi_1 $ & if $\mathcal{M}, \rho_0 \models_{IY} \varphi_1$ \\
$\mathcal{M},\rho \models_{IY} \neg \varphi_2 $ & if $\mathcal{M},\rho \not\models_{IY} \varphi_2$ \\
$\mathcal{M},\rho \models_{IY} \varphi_2 \vee \varphi_3 $ & if $\mathcal{M},\rho \models_{IY} \varphi_2$ or $\mathcal{M}, \rho \models_{IY} \varphi_3$ \\
$\mathcal{M},\rho \models_{IY} \textbf{X} \varphi_2 $ & if $\mathcal{M},\rho_{\ge 1}\models_{IY} \varphi_2$ \\
$\mathcal{M},\rho \models_{IY} \varphi_2 \textbf{U}\varphi_3 $ & if $\exists k . \mathcal{M},\rho_{\ge k}\models_{IY} \varphi_3$ and $\forall j < k. \mathcal{M}, \rho_{\ge j} \models_{IY} \varphi_2$ \\

\end{tabular}
\\

For iCGMs the semantics are defined similarly, but for $\llangle A \rrangle \varphi$ to be true in state $s$ the coalition $A$ must have a strategy to make sure $\varphi$ is satisfied in all plays starting in states that are indistinguishable from $s$ to one of the players in $A$. Now, for state formulas we define for all iCGMs $\mathcal{M} = (\mathcal{G}, \pi)$, all states $s$, all propositions $p \in \textup{Prop}$, all state formulas $\varphi_1$ and $\varphi_2$, all path formulas $\varphi_3$, all coalitions $A \in \agt$ and all $Y \in \{r,F_1,F_2,...,F,R\}$
\\

\begin{tabular}{l l}
$\mathcal{M},s \models_{iY} p $ & if $p \in \pi(s)$ \\
$\mathcal{M},s \models_{iY} \neg \varphi_1 $ & if $\mathcal{M},s \not\models_{iY} \varphi_1$ \\
$\mathcal{M},s \models_{iY} \varphi_1 \vee \varphi_2 $ & if $\mathcal{M},s \models_{iY} \varphi_1$ or $\mathcal{M},s \models_{iY} \varphi_2$ \\
$\mathcal{M},s \models_{iY} \llangle A \rrangle \varphi_3 $ & if there exist strategies $(\sigma_A)_{a \in A} \in \prod_{a \in A} \textup{Strat}^Y_a$ such that  \\
 & for every $a \in A$, every $s' \sim_a s$ and every $\rho \in \textup{Out}_\mathcal{G}(s',\sigma_A) $\\
 & we have $\mathcal{M},\rho \models_{iY} \varphi_3$\\

\end{tabular}
\\

For path formulas we define for all iCGMs $\mathcal{M} = (\mathcal{G}, \pi)$, all paths $\rho$, all propositions $p \in \textup{Prop}$, all state formulas $\varphi_1$, all path formulas $\varphi_2$ and $\varphi_3$, all coalitions $A \in \agt$ and all $Y \in \{r,F_1,F_2,...,F,R\}$
\\

\begin{tabular}{l l}
$\mathcal{M},\rho \models_{iY} \varphi_1 $ & if $\mathcal{M}, \rho_0 \models_{iY} \varphi_1$ \\
$\mathcal{M},\rho \models_{iY} \neg \varphi_2 $ & if $\mathcal{M},\rho \not\models_{iY} \varphi_2$ \\
$\mathcal{M},\rho \models_{iY} \varphi_2 \vee \varphi_3 $ & if $\mathcal{M},\rho \models_{iY} \varphi_2$ or $\mathcal{M}, \rho \models_{iY} \varphi_3$ \\
$\mathcal{M},\rho \models_{iY} \textbf{X} \varphi_2 $ & if $\mathcal{M},\rho_{\ge 1}\models_{iY} \varphi_2$ \\
$\mathcal{M},\rho \models_{iY} \varphi_2 \textbf{U}\varphi_3 $ & if $\exists k . \mathcal{M},\rho_{\ge k}\models_{iY} \varphi_3$ and $\forall j < k. \mathcal{M}, \rho_{\ge j} \models_{iY} \varphi_2$ \\

\end{tabular}
\\

We will occasionally write $\models_{XY}^{L}$ to emphasize that the semantics is for the logic $L$, but omit it when the logic is clear from the context as above.

\section{Expressiveness}
\label{sec:exp}
With the new types of semantics introduced we are interested in when the new types of semantics are different and when they are equivalent. For instance, in \cite{Sch04} it was noted that $\models_{Ir}$ and $\models_{IR}$ are equivalent for $ATL$, but not $ATL^*$. We do a similar comparison for the different kinds of semantics in order to understand the capabilities of different amounts of memory in different games. In addition, since there is equivalence in some cases this gives us fewer different cases to solve when considering the model-checking problem. We start by looking only at formulas of the form $\llangle A \rrangle \varphi$ where $A \subseteq \agt$ and $\varphi$ is an LTL formula. Denote the fragments of $ATL$ and $ATL^*$ restricted to this kind of formulas by $ATL_0$ and $ATL^*_0$ respectively. A nice property of these fragments is the following proposition, which tells us that to have equivalence of semantics for two types of memory for either $ATL$ or $ATL^*$ it is sufficient to consider the fragments $ATL_0$ and $ATL^*_0$ respectively.
\\

\begin{myprop}
\label{prop:prop1}
For $X \in \{i,I\}$ and $Y,Z \in \{r,F_1,F_2,...,F,R\}$ we have

\begin{enumerate}
\item $\models_{XY}^{ATL} \hspace{0.2cm} = \hspace{0.2cm} \models_{XZ}^{ATL} \textup{ if and only if } \models_{XY}^{ATL_0} \hspace{0.2cm} = \hspace{0.2cm} \models_{XZ}^{ATL_0}$

\item $\models_{XY}^{ATL^*} \hspace{0.2cm} = \hspace{0.2cm} \models_{XZ}^{ATL^*} \textup{ if and only if } \models_{XY}^{ATL^*_0} \hspace{0.2cm} = \hspace{0.2cm} \models_{XZ}^{ATL^*_0}$

\end{enumerate}
\end{myprop}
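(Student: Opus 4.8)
The plan is to prove both implications, of which the forward direction is immediate and the backward direction carries all the content. For the direction $(\Rightarrow)$, recall that $ATL_0$ and $ATL^*_0$ are by definition fragments of $ATL$ and $ATL^*$, i.e. subcollections of the respective sets of formulas. Hence two satisfaction relations that agree on every $ATL$ (resp. $ATL^*$) formula agree \emph{a fortiori} on the smaller collection $ATL_0$ (resp. $ATL^*_0$). I would therefore dispatch this direction in one line and concentrate on $(\Leftarrow)$.

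For $(\Leftarrow)$ I would fix $X$, $Y$, $Z$ and prove, by induction on the structure of state formulas, that for every CGM/iCGM $\mathcal{M}$ and every state $s$ we have $\mathcal{M}, s \models_{XY} \varphi$ iff $\mathcal{M}, s \models_{XZ} \varphi$. The atomic and boolean cases are immediate, since the semantic clauses for $p$, $\neg$ and $\vee$ do not mention $Y$ or $Z$ and the induction hypothesis applies to the proper subformulas. The entire difficulty is concentrated in the modal case $\varphi = \llangle A \rrangle \psi$, where $\psi$ is a path formula: for $ATL$ one of $\textbf{X}\varphi_1$, $\textbf{G}\varphi_1$, $\varphi_1 \textbf{U} \varphi_2$, and for $ATL^*$ an arbitrary path formula.

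The key device for the modal case is a relabelling that turns $\llangle A \rrangle \psi$ into a formula of $ATL_0$ (resp. $ATL^*_0$) without changing its truth value. Let $\chi_1, \dots, \chi_k$ be the maximal proper state subformulas of $\psi$ -- the atoms of its LTL skeleton -- and introduce fresh propositions $p_1, \dots, p_k \notin \prop$. Since each $\chi_i$ is a proper subformula of $\varphi$, the induction hypothesis tells us that $\chi_i$ has the \emph{same} truth set under $\models_{XY}$ and under $\models_{XZ}$; I would therefore define $\mathcal{M}'$ to be $\mathcal{M}$ equipped with the labelling $\pi'(s) = \pi(s) \cup \{\, p_i \mid \mathcal{M}, s \models_{XY} \chi_i \,\}$, which is unambiguous precisely because of this coincidence. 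Writing $\psi'$ for the pure LTL formula obtained from $\psi$ by substituting each $\chi_i$ with $p_i$, the formula $\llangle A \rrangle \psi'$ then lies in $ATL_0$ (resp. $ATL^*_0$).

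The crucial substitution lemma I would then establish is that for each $W \in \{Y,Z\}$,
$$\mathcal{M}, s \models_{XW} \llangle A \rrangle \psi \iff \mathcal{M}', s \models_{XW} \llangle A \rrangle \psi'.$$
This rests on two observations. First, $\mathcal{M}$ and $\mathcal{M}'$ share the same underlying game structure, so the strategy sets $\textup{Strat}^W_a$ and the outcome sets $\textup{Out}_\mathcal{G}(s',\sigma_A)$ are identical. Second, along any fixed outcome $\rho$ the truth of an LTL formula depends only on the truth values of its atoms at the positions of $\rho$, a routine induction on LTL structure using the path-formula clauses. Since $p_i \in \pi'(\rho_t)$ iff $\mathcal{M}, \rho_t \models_{XW} \chi_i$ -- directly when $W=Y$, and via the induction hypothesis when $W=Z$, as $\pi'$ was defined through $\models_{XY}$ -- we obtain $\mathcal{M}, \rho \models_{XW} \psi$ iff $\mathcal{M}', \rho \models_{XW} \psi'$, so the lemma follows from the semantic clause for $\llangle A \rrangle$ in both the complete- and incomplete-information cases. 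Chaining the lemma for $W=Y$, then the hypothesis $\models_{XY}^{ATL_0} = \models_{XZ}^{ATL_0}$ applied to $\llangle A \rrangle \psi' \in ATL_0$, then the lemma for $W=Z$, yields $\mathcal{M}, s \models_{XY} \llangle A \rrangle \psi$ iff $\mathcal{M}, s \models_{XZ} \llangle A \rrangle \psi$ and closes the induction. I expect the main obstacle to be the clean formulation and verification of this substitution lemma -- in particular making precise the LTL-skeleton decomposition into maximal state subformulas for $ATL^*$, and checking that relabelling leaves strategies and outcomes untouched -- while the remaining steps are bookkeeping.
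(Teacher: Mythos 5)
Your proof is correct and uses essentially the same device as the paper: introduce fresh propositions labelling the (coinciding) truth sets of state subformulas so that each strategy quantifier can be evaluated on an $L_0$ formula over the relabelled model; the paper merely organizes this as $k$ innermost-first rewriting rounds applied to the whole formula rather than as a structural induction with the substitution lemma you state. One cosmetic repair: for $ATL^*$ you should take the maximal state subformulas of $\psi$ rather than the maximal \emph{proper} ones, so that in the degenerate case $\psi = \llangle B \rrangle \theta$ you get $\psi' = p_1$ and $\psi'$ is genuinely an $LTL$ formula.
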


\begin{proof}

We treat both cases simultaneously and let $L \in \{ATL, ATL^*\}$. $(\Rightarrow)$ The first direction is trivial, since the set of $L_0$ formulas is included in the set of $L$ formulas. $(\Leftarrow)$ For the second direction suppose $\models_{XY}^{L_0} \hspace{0.2cm} = \hspace{0.2cm} \models_{XZ}^{L_0}$. Let $\mathcal{M} = (\mathcal{G}, \pi)$ be an (i)CGM over the set $\textup{Prop}$ of proposition symbols. Let $\varphi$ be an arbitrary formula from $L$ that contains $k$ strategy quantifiers. Let $\varphi = \varphi_0$ and $\pi = \pi_0$. We transform $\varphi_0$ and $\pi_0$ in $k$ rounds, in each round $1 \le j \le k$ the innermost subformula $\varphi'$ of $\varphi_{j-1}$ with a strategy quantifier as main connective is replaced by a new proposion $p_j \not\in \textup{Prop}$ to obtain $\varphi_j$. The labeling function is extended such that for all states $s$ we have

$$
\pi_j(s) = \left\{ \begin{array}{rl}
 \pi_{j-1}(s) \cup \{p_j\} &\mbox{ if $(\mathcal{G}, \pi_{j-1}), s \models^{L_0}_{XY} \varphi'$} \\
 \pi_{j-1}(s) &\mbox{ otherwise}
       \end{array} \right.
$$

Note that because of our initial assumption we have $(\mathcal{G}, \pi_{j-1}), s \models_{XY} \varphi'$ if and only if $(\mathcal{G}, \pi_{j-1}), s \models_{XZ} \varphi'$ since $\varphi'$ is an $L_0$ formula. Therefore, for each $j$ and all paths $\rho$ we also have
\\

$(\mathcal{G}, \pi_{j-1}), \rho \models_{XY} \varphi_{j-1}$ if and only if $(\mathcal{G}, \pi_j), \rho \models_{XY} \varphi_j$ and
\\

$(\mathcal{G}, \pi_{j-1}), \rho \models_{XZ} \varphi_{j-1}$ if and only if $(\mathcal{G}, \pi_j), \rho \models_{XZ} \varphi_j$
\\

In particular, $\varphi_k$ is an $LTL$ formula and therefore for all $\rho$ we have $(\mathcal{G}, \pi_k), \rho \models_{XY} \varphi_k$ if and only if $(\mathcal{G}, \pi_k), \rho \models_{XZ} \varphi_k$. Together with the above we get for all $\rho$ that
\\

$(\mathcal{G}, \pi_0), \rho \models_{XY} \varphi_0$ \hspace{0.2cm} iff \hspace{0.2cm} $(\mathcal{G}, \pi_1), \rho \models_{XY} \varphi_1$ \hspace{0.2cm} iff \hspace{0.2cm} $...$ \hspace{0.2cm} iff \hspace{0.2cm} $(\mathcal{G}, \pi_k), \rho \models_{XY} \varphi_k$ \hspace{0.2cm} iff\\

$(\mathcal{G}, \pi_k), \rho \models_{XZ} \varphi_k$ \hspace{0.2cm} iff \hspace{0.2cm} $...$ \hspace{0.2cm} iff \hspace{0.2cm} $(\mathcal{G}, \pi_1), \rho \models_{XZ} \varphi_1$ \hspace{0.2cm} iff \hspace{0.2cm} $(\mathcal{G}, \pi_0), \rho \models_{XZ} \varphi_0$\\

Thus, $\models_{XY}^L \hspace{0.2cm} = \hspace{0.2cm} \models_{XZ}^L$ since $\varphi$ and $\mathcal{M}$ was chosen arbitrarily.

\end{proof}

The relations between different types of semantics presented in Figure \ref{fig:tabexpress} provide insights about the need of memory for winning strategies in games with various amounts of information and types of $LTL$ objectives that can be specified in $ATL_0/ATL_0^*$. In addition, according to Proposition \ref{prop:prop1} the cases of equivalence in Figure \ref{fig:tabexpress} are exactly the cases of equivalence for the full $ATL/ATL^*$. We will use the rest of this section to prove the results of this table.

\begin{figure}[here]
\begin{center}
\renewcommand{\arraystretch}{1.5}
\begin{tabular}{l | l l l l l l l l l l l}

\multicolumn{1}{c|}{Logic} & \multicolumn{11}{c}{Expressiveness}\\
\hline
$ATL_0$ w. complete info & $\models^{ATL_0}_{Ir}$ & $=$ & $ \models^{ATL_0}_{IF_2} $ & $= $ & $\models^{ATL_0}_{IF_3} $ & $ = $ & $ ... $ & $ = $ & $ \models ^{ATL_0}_{IF} $ & $ = $ & $ \models^{ATL_0}_{IR}$ \\
\hline
$ATL_0$ w. incomplete info & $\models^{ATL_0}_{ir} $ & $\subset$ & $\models^{ATL_0}_{iF_2}$ & $\subset$ & $\models^{ATL_0}_{iF_3} $ & $ \subset$ & $ ... $ & $\subset$ & $\models^{ATL_0}_{iF}$ & $\subset$ & $\models^{ATL_0}_{iR}$\\
\hline
$ATL_0^*$ w. complete info & $\models^{ATL_0^*}_{Ir}$ & $\subset$ & $\models^{ATL_0^*}_{IF_2}$ & $\subset$ & $\models^{ATL_0^*}_{IF_3}$ & $\subset$ & $...$ & $\subset$ & $\models^{ATL_0^*}_{IF}$ & $=$ & $\models^{ATL_0^*}_{IR}$\\
\hline
$ATL_0^*$ w. incomplete info & $\models^{ATL_0^*}_{ir}$ & $\subset$ & $\models^{ATL_0^*}_{iF_2}$ & $\subset$ & $\models^{ATL_0^*}_{iF_3}$ & $\subset$ & $...$ & $\subset$ & $\models^{ATL_0^*}_{iF}$ & $\subset$ & $\models^{ATL_0^*}_{iR}$\\
\hline

\end{tabular}
\end{center}

\caption{Relations between the different types of semantics}
\label{fig:tabexpress}
\end{figure}

\subsection{Complete information games}

For complete information games, the question of whether a (memoryless/finite-memory/perfect recall) winning strategy exists for a coalition $A$ can be reduced to the question of whether a (memoryless/finite-memory/perfect recall) winning strategy exists for player 1 in a two-player turn-based game. The idea is to let player 1 control coalition $A$ and let player 2 control coalition $\agt \setminus A$ and give player 2 information about the action of player 1 before he has to choose in each round of the game in order to make it turn-based. Since $ATL_0$ can only be used to express reachability ($\llangle A \rrangle \varphi_1 \textbf{U} \varphi_2$), safety ($\llangle A \rrangle \textbf{G} \varphi_1$) and 1-step reachability ($\llangle A \rrangle \textbf{X} \varphi_1$) objectives where no memory is needed for winning strategies \cite{EJ91}, it follows that all types of semantics considered are equal in $ATL$ with complete information as noted in \cite{Sch04}. Since $ATL^*_0$ can only be used to express $LTL$ objectives, it follows that $\models^{ATL^*}_{IF} \hspace{0.2cm} = \hspace{0.2cm} \models^{ATL^*}_{IR}$ since only finite memory is needed for winning strategies in such games \cite{PR89}.

\subsection{The bounded-memory hierarchy}

The bounded-memory hierarchy is increasing for $ATL_0/ATL_0^*$ because when a coalition has a $k$-memory winning strategy, then it also has a $k+1$-memory winning strategy which can be obtained by adding a disconnected memory-state to the DFST representing the strategy. For $ATL^*_0$ with complete information the hierarchy is strict. This can be seen since the family $\varphi_k = \llangle \{1 \} \rrangle \textbf{X}^k p$ of formulas for $k \ge 1$ has the property that $\mathcal{M}, s_0 \models_{IF_k} \varphi_k$ and $\mathcal{M},s_0 \not \models_{IF_{k-1}} \varphi_k$ for $k \ge 2$ for the one-player CGM $\mathcal{M}$ illustrated in Figure \ref{fig:m1}. Here player 1 wins if he chooses $w$ (wait) the first $k-1$ rounds and then chooses $g$ (go) in the $k$th round.

\begin{figure}[here]
\begin{center}
\begin{tikzpicture}

\tikzstyle{every node}=[ellipse, draw=black,
                        inner sep=0pt, minimum width=25pt, minimum height=25pt]

\draw (2,5) node [label=below:$s_0$] (s0)	{};
\draw (4,5) node [label=below:$s_1$] (s1)	{$p$};
\draw (6,5) node [label=below:$s_2$] (s2) {};

\path[->] (s0) edge node [below, draw=none] {$g$ } (s1);
\path[->] (s1) edge node [below, draw=none] {$w,g$ } (s2);

\path[->] (s0) edge [loop left] node [left, draw=none] {$w$ } (s0);
\path[->] (s2) edge [loop right] node [right, draw=none] {$g,w$ } (s2);

\end{tikzpicture}
\end{center}
\caption{CGM $\mathcal{M}$}
\label{fig:m1}

\end{figure}
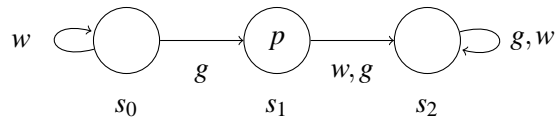

The reason that the property $\textbf{X}^k p$ cannot be forced by player 1 using a $(k-1)$-memory strategy is that the DFST representing the strategy would have to output the action $w$ in the first $k-1$ rounds followed by an output of the action $g$ when reading the same input $s_0$ in every round. This is not possible, because after $k-1$ rounds there must have been at least one repeated memory-state and from such a repeated state, the DFST would keep repeating its behavior. Therefore, it will either output $w$ forever or output $g$ before the $k$th round, making it unable to enforce $\textbf{X}^k p$. For $ATL_0/ATL^*_0$ with incomplete information, we can show the same result for the formula $\psi = \llangle \{1 \} \rrangle \textbf{F} p$ for the family $\mathcal{M}_k$ of iCGMs illustrated in Figure \ref{fig:m2} where $k \ge 1$. In this game all states except $s_0$ are in the same observation set for player 1. Here we have $\mathcal{M}_k, s_0 \models_{iF_k} \psi$ and $\mathcal{M}_k, s_0 \not \models_{iF_{k-1}} \psi$.

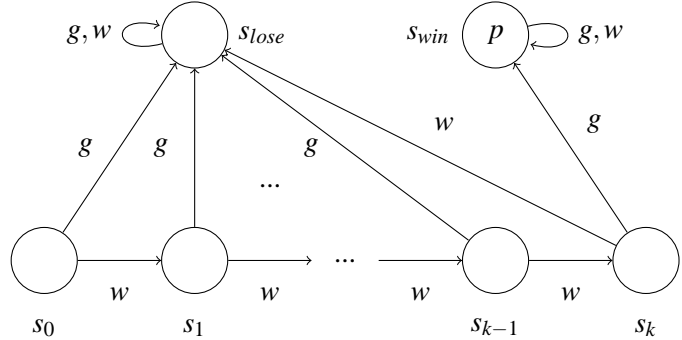
\begin{figure}[here]
\begin{center}
\begin{tikzpicture}

\tikzstyle{every node}=[ellipse, draw=black,
                        inner sep=0pt, minimum width=25pt, minimum height=25pt]

\draw (2,5) node [label=below:$s_0$] (s0)	{};
\draw (4,5) node [label=below:$s_1$] (s1)	{};
\draw (6,5) node [draw=none] (dots) {...};
\draw (8,5) node [label=below:$s_{k-1}$] (sk-1) {};
\draw (10,5) node [label=below:$s_{k}$] (sk) {};

\draw (4,8) node [label=right: $s_{lose}$] (sl)	{};
\draw (8,8) node [label=left: $s_{win}$] (sw)	{$p$};

\draw (5,6) node [draw=none] (dots2) {...};

\path[->] (s0) edge node [below, draw=none] {$w$ } (s1);
\path[->] (s1) edge node [below, draw=none] {$w$ } (dots);
\path[->] (dots) edge node [below, draw=none] {$w$ } (sk-1);
\path[->] (sk-1) edge node [below, draw=none] {$w$ } (sk);

\path[->] (s0) edge node [left, draw=none] {$g$ } (sl);
\path[->] (s1) edge node [left, draw=none] {$g$ } (sl);
\path[->] (sk-1) edge node [left, draw=none] {$g$ } (sl);
\path[->] (sk) edge node [above right, draw=none] {$w$ } (sl);
\path[->] (sk) edge node [above right, draw=none] {$g$ } (sw);

\path[->] (sl) edge [loop left] node [left, draw=none] {$g,w$ } (sl);
\path[->] (sw) edge [loop right] node [right, draw=none] {$g,w$ } (sw);

\end{tikzpicture}
\end{center}
\caption{iCGM $\mathcal{M}_k$}
\label{fig:m2}

\end{figure}

Player 1 wins exactly if he chooses $w$ for the first $k$ rounds and then $g$, which is not possible for a $(k-1)$-memory strategy when it receives the same input symbol in every round after the initial round as in the previous example.

The reason why the bounded-memory hierarchies are not increasing for $ATL/ATL^*$ in general is the possibility of using negation of strategically quantified formulas. For instance, given an $ATL_0$ formula $\varphi$, an iCGM $\mathcal{M}$ and a state $s$ such that $\mathcal{M}, s \models_{iF_k} \varphi$ and $\mathcal{M}, s \not\models_{iF_{k-1}} \varphi$ for some $k$, then for the $ATL$ formula $\neg \varphi$ we have $\mathcal{M}, s \not \models_{iF_k} \neg \varphi$ and $\mathcal{M}, s \models_{iF_{k-1}} \neg \varphi$.

\subsection{Infinite memory is needed}
\label{sec:expinf}

Finally, infinite memory is actually needed in some cases for $ATL_0/ATL^*_0$ with incomplete information. This is shown in a slightly different framework in \cite{BK10} where an example of a game is given with initial state $s_0$ such that $\mathcal{M}, s_0 \models_{iR} \llangle \{1,2\} \rrangle \textbf{G} \neg p$ and $\mathcal{M}, s_0 \not\models_{iF} \llangle \{1,2\} \rrangle \textbf{G} \neg p$ for a proposition $p$. We will not repeat the example here, but in the undecidability proof in Section \ref{sec:undec} an example of such a game is given. This means that $\models^L_{iF} \hspace{0.2cm} \neq \hspace{0.2cm} \models^L_{iR}$ for $L \in \{ATL_0, ATL^*_0\}$. We have $\models^{L}_{iF} \hspace{0.2cm}\subseteq\hspace{0.2cm} \models^L_{iR}$ since all finite-memory strategies are perfect recall strategies and therefore $\models^L_{iF} \hspace{0.2cm}\subset\hspace{0.2cm} \models^L_{iR}$ which concludes the last result of Figure \ref{fig:tabexpress}.

\section{Model-checking}

In this section we look at the decidability and complexity of model-checking $ATL/ATL^*$ with the new semantics introduced and compare with the results for memoryless and perfect recall semantics. We adopt the same way of measuring input size as in \cite{AHK97, AHK02, Sch04, JD08} where the input is measured as the size of the game structure and the size of the formula to be checked. In the case of bounded-memory semantics, we also include in the input size the size of the memory-bound $k$ encoded in unary. Our results can be seen in Figure \ref{tab:table1} along with known results for memoryless and perfect recall semantics.

\begin{figure}[here]
\begin{center}
\begin{tabular} { p{0.75cm} || p{5cm} | p{5cm}}

& $ATL$ & $ATL^*$\\
\hline
$\models_{Ir}$ & $PTIME$ \cite{AHK02} & $PSPACE$ \cite{Sch04}\\
$\models_{IF_k}$ & $PTIME$ & $PSPACE$\\
$\models_{IF}$ & $PTIME$ & $2EXPTIME$\\
$\models_{IR}$ & $PTIME$ \cite{AHK02} & $2EXPTIME$ \cite{AHK02}\\

\end{tabular}

\vspace{0.5cm}
\begin{tabular} { p{0.75cm} || p{5cm} | p{5cm}}

& $ATL$ & $ATL^*$\\
\hline
$\models_{ir}$ & $\Delta_2^p$ \cite{Sch04, JD08} & $PSPACE$ \cite{Sch04}\\
$\models_{iF_k}$ & $\Delta_2^p$ & $PSPACE$\\
$\models_{iF}$ & Undecidable & Undecidable\\
$\models_{iR}$ & Undecidable \cite{AHK02, DT11} & Undecidable \cite{AHK02, DT11}\\

\end{tabular}
\end{center}
\caption{Model-checking complexity for $ATL,ATL^*$. All complexity results are completeness results.}
\label{tab:table1}

\end{figure}

As can be seen in the figure, we obtain the same complexity for bounded-memory semantics as for memoryless semantics in all the cases which is positive, since we can solve many more games while staying in the same complexity class. We also obtain the same complexity for finite-memory semantics as perfect recall semantics, including undecidability for incomplete information games, which is disappointing. We will use the rest of the section to prove these results. In many cases this is done by using known results and techniques and modifying them slightly as well as using the results from Section \ref{sec:exp}.

\subsection{Using expressiveness results}

In section \ref{sec:exp} it was shown that $\models^{ATL}_{Ir} \hspace{0.2cm} = \hspace{0.2cm} \models^{ATL}_{IF_2} \hspace{0.2cm} = \hspace{0.2cm} \models^{ATL}_{IF_3} \hspace{0.2cm} = \hspace{0.2cm} ... \hspace{0.2cm} = \hspace{0.2cm} \models^{ATL}_{IF}$ which means that the model-checking problem is the same for these cases. Since $\models^{ATL}_{Ir}$ is known to be $PTIME$-complete \cite{AHK02} the result is the same for finite-memory semantics and bounded-memory semantics. It was also shown that $\models^{ATL^*}_{IF} \hspace{0.2cm} = \hspace{0.2cm} \models^{ATL^*}_{IR}$. Since model-checking $ATL_{IR}^*$ is $2EXPTIME$-complete \cite{AHK02} so is model-checking $ATL^*_{IF}$ since it is the same problem.

\subsection{Bounded-memory semantics}

For model-checking $ATL_{iF_k}, ATL^*_{IF_k}$ and $ATL^*_{iF_k}$ we employ some of the same ideas as in \cite{Sch04} for memoryless semantics, but extend them to deal with bounded-memory strategies. We first consider model-checking $ATL^*_0$ formulas with $iF_k$ semantics. Model-checking an $ATL^*_0$ formula $\llangle A \rrangle \varphi$ in an iCGM $\mathcal{M} = (\mathcal{G}, \pi)$ with $\mathcal{G} = (\states, \agt, \act, \mov, \tab, (\sim_j)_{1 \le j \le n})$ and initial state $s_0$ can be done using non-determinism as follows. First, assume without loss of generality that $A = \{1,...,r\}$ with $r \le n$. Use non-determinism to guess a $k$-memory strategy $\sigma = (\sigma_j)_{1 \le j \le r}$ for each of the players in $A$ represented by DFSTs $\mathcal{A}_j = (M_j, m_{j0}, {[]}_j, \act, T_j,G_j)$ for $j \in A$. Check that this strategy enforces $\varphi$ by creating a labelled and initialized transition system $T(s'_0,\sigma) = (Q,R,L,q_0)$ for all $s'_0 \sim_j s_0$ for some $1 \le j \le r$ in which the set of paths corresponds to the $\sigma$-outcomes from $s'_0$ in $\mathcal{G}$. The set $Q$ of states, the transition relation $R \subseteq Q \times Q$, the labeling function $L: Q \in 2^{\textup{Prop}}$ and the initial state $q_0$ are constructed as follows.

\begin{itemize}

\item $Q = \states \times \prod_{j = 1}^r M_j$

\item $([s,(m_1,...,m_r)], [s',(m'_1,...,m'_r)]) \in R$ if and only if there exists $a_{r+1},...,a_n \in \act$ so

\begin{itemize}

\item $\tab(s,(G_1(m_1,[s]_1),...,G_r(m_r,[s]_r),a_{r+1},...,a_n)) = s'$ and

\item $T_j(m_j,[s]_j) = m'_j$ for $1 \le j \le r$

\end{itemize}

\item $L(s,(m_1,...,m_r)) = \pi(s)$ for all $(s,(m_1,...,m_r)) \in Q$

\item $q_0 = (s'_0,(m_{10},...,m_{r0}))$

\end{itemize}

Intuitively, each state in the transition system corresponds to a state of the game as well as possible combinations of memory values for players in $A$. It can then be shown that $\rho = \rho_0 \rho_1 ...$ is a $\sigma$-outcome in $\mathcal{G}$ from $\rho_0 = s'_0$ if and only if there exists $(m_{1j},...,m_{rj}) \in \prod_{j = 1}^r M_r$ for $j \ge 0$ such that $\rho' = (\rho_0,(m_{10},...,m_{r0})) (\rho_1,(m_{11},...,m_{r1})) ...$ is a path in $T(s'_0,\sigma)$. This means that $\sigma$ is a witness that $\mathcal{M}, s_0 \models_{iF_k} \llangle A \rrangle \varphi$ if and only if $T(s'_0, \sigma),q_0 \models_{CTL^*} \textbf{A} \varphi$ for all $s'_0 \sim_j s_0$ for some $1 \le j \le r$. Note that the size of the transition systems are polynomial in the size of the input because $|Q| = k^r$, the number $n$ of agents is fixed and $r \le n$. In addition, the transition systems $T(s_1,\sigma)$ and $T(s_2,\sigma)$ are equal for any $s_1,s_2 \in \states$ except for the initial state of the transition systems. Thus, we can use the same transition system to do the check for the different initial states. We can perform this check of a strategy $\sigma$ in $PSPACE$ since $CTL^*$ model-checking can be done in $PSPACE$ \cite{CES86}. Moreover, when $\llangle A \rrangle \varphi$ is an $ATL_0$ formula, the check can be done in $PTIME$ since $CTL$ model-checking can be done in $PTIME$ \cite{CE81}. Thus, we can do model-checking of $ATL_0$ and $ATL^*_0$ with $iF_k$ semantics in $NP$ and $PSPACE$ respectively.

We extend the above algorithm to full $ATL$ and $ATL^*$ by evaluating the strategically quantified subformulas in a bottom up fashion, starting with the innermost formula and moving outwards resembling the technique typically used in $CTL^*$ model-checking \cite{CES86}. In both cases we need to make a linear amount of calls to the $ATL_0/ATL^*_0$ algorithm in the size of the formula to be checked. This gives us a $\Delta^p_2 = P^{NP}$ algorithm and a $PSPACE$ algorithm in $ATL$ and $ATL^*$ respectively. Since $ATL^*$ with $IF_k$ semantics is a special case, the $PSPACE$ algorithm also works here. The $PSPACE$-hardness for $ATL^*_{iF_k}$ and $ATL^*_{IF_k}$ follows from $PSPACE$-hardness of $ATL^*_{Ir}$ \cite{Sch04} since this is a special case of the two. In the same way $\Delta_2^p$-hardness of $ATL_{iF_k}$ follows from $\Delta_2^p$ hardness of $ATL_{ir}$ \cite{JD08}.

\subsection{Undecidability of finite-memory semantics}

\label{sec:undec}

In \cite{DT11} it was proven that model-checking $ATL$ and $ATL^*$ with $iR$ semantics is undecidable, even for as simple a formula as $\llangle A \rrangle \textbf{G} \neg p$ for $n \ge 3$ players. We provide a proof sketch for the same result for $iF$ semantics inspired by a technique from \cite{BK10} which also illustrates that infinite memory is needed in some games. The idea is to reduce the problem of whether a deterministic Turing machine with a semi-infinite tape that never writes the blank symbol repeats some configuration twice when started with an empty input tape, with the convention that the Turing machine will keep looping in a halting configuration forever if a halting state is reached. This problem is undecidable since the halting problem can be reduced to it. From a given Turing machine $T = (Q,q_0,\Sigma, \delta, B, F)$ of this type where $Q$ is the set of states, $q_0$ is the initial state, $\Sigma$ is the tape alphabet, $\delta: Q \times (\Sigma \cup \{B\}) \rightarrow Q \times \Sigma \times \{L,R\}$ is the transition function, $B$ is the blank symbol and $F$ is the set of accepting states, we generate a three-player concurrent game model $\mathcal{M}_T = (\mathcal{G}_T, \pi_T)$ with a state $s_0$ such that $\mathcal{M}_T, s_0 \models_{iF} \llangle \{ 1,2 \} \rrangle \textbf{G} \neg p$ if and only if $T$ repeats some configuration twice.

Consider the three-player game $\mathcal{M}_T$ in Figure \ref{fig:m4}. To make the figure more simple, we only write the actions of player 1 and 2 along edges and let player 3 choose a successor state, given the choices of player 1 and 2. If player 1 and 2 choose an action tuple that is not present on an edge from the current state of the game, the play goes to a sink state where $p$ is true. In all other states $p$ is false. Both player 1 and 2 have three observation sets, which are denoted $0$, $\cdot$ and I (though, they are not equal for the two players). In the figure we write $x \mid y$ in a state if the state is in observation set $x$ for player 1 and $y$ for player 2. The play starts in $s_0$ which is the only state in observation set $0$ for both player 1 and 2. The rules of the game are such that player 3 can choose when to let player 1 receive observation I. He can also choose to either let player 2 receive observation I at the same time as player 1 or let him receive it in the immediately following state of the game. Both player 1 and 2 can observe I at most once during the game. It can be seen from the game graph that both player 1 and 2 must play action $a$ until they receive observation I in order not to lose. We design the game so they must play the $v$th configuration of the Turing machine $T$ when receiving observation I after $v$ rounds in a winning strategy for all $v \ge 1$. To do this we let the tape alphabet and the set of control states of $T$ be legal actions for player 1 and 2. By playing a configuration, we mean playing the contents of the non-blank part of the tape of $T$ one symbol at a time from left to right and playing the control state immediately before the content of the cell that the tape head points to.

\begin{figure}[here]
\begin{center}
\begin{tikzpicture}

\tikzstyle{every node}=[ellipse, draw=black,
                        inner sep=0pt, minimum width=25pt, minimum height=25pt]

\draw (0,5) node [label=below:$s_0$] (s0)	{$0 \mid 0$};
\draw (2,7) node (s1)	{I $ \mid $ I};
\draw (4,7) node (s2)	{$\cdot  \mid \cdot$ };
\draw (6,5) node (s3)	{$\cdot  \mid \cdot$ };

\draw (2,4) node (s5)	{$\cdot \mid \cdot$};
\draw (4,5) node (s6)	{I $\mid$ I};
\draw (4,2) node (s9)	{I $\mid \cdot$ };
\draw (6,2) node (s10)	{$\cdot \mid$ I};
\draw (11,2) node [draw=none] (s7)	{$\begin{array}{l}\textup{Player 1 and 2 must play configs}  \\ C_1 \textup{ and } C_2 \textup{ such that } C_1 \vdash_T C_2 \\ \end{array}$};

\draw (3.7,7) node [minimum width = 150pt, minimum height = 50pt, style=dashed, rectangle, label = right:$\mathcal{M}_1$] (m1) {};

\draw (5.7,5.1) node [minimum width = 150pt, minimum height = 47pt, style=dashed, rectangle, label = right:$\mathcal{M}_2$] (m2) {};

\draw (8.5,2) node [minimum width = 320pt, minimum height = 50pt, style=dashed, rectangle, label = right:$\mathcal{M}_3$] (m3) {};

\path[->] (s0) edge node [above left, draw=none] {$(a,a)$ } (s1);
\path[->] (s0) edge [bend right] node [below left, draw=none] {$(a,a)$ } (s9);
\path[->] (s1) edge node [above, draw=none] {$(q_0,q_0)$} (s2);
\path[->] (s2) edge [loop right] node [right, draw=none] {$(*,*)$ } (s2);
\path[->] (s3) edge [loop right] node [right, draw=none] {$(*,*)$ } (s3);
\path[->] (s0) edge node [above right, draw=none] {$(a,a)$ } (s5);
\path[->] (s5) edge [loop right] node [right, draw=none] {$(a,a)$ } (s5);
\path[->] (s5) edge node [above left = 2, draw=none] {$(a,a)$ } (s6);
\path[->] (s6) edge node [above, draw=none] {$(*,*)$} (s3);
\path[->] (s5) edge node [above right, draw=none] {$(a,a)$ } (s9);
\path[->] (s9) edge node [below, draw=none] { } (s10);
\path[->] (s10) edge node [left, draw=none] {} (8,2);
\end{tikzpicture}
\end{center}
\caption{iCGM $\mathcal{M}_T$}
\label{fig:m4}

\end{figure}
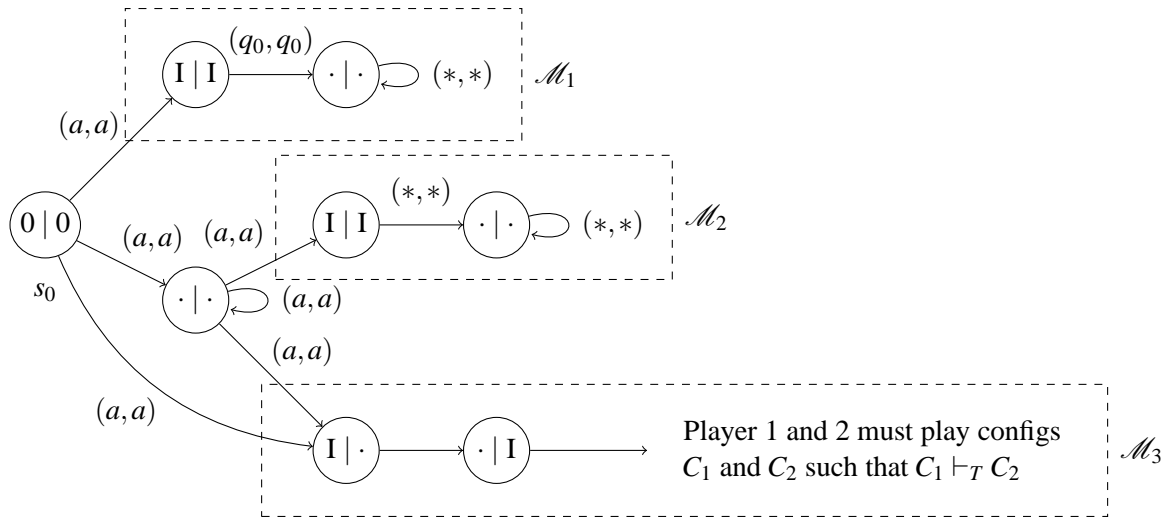

We design $\mathcal{M}_T$ with three modules $\mathcal{M}_1, \mathcal{M}_2$ and $\mathcal{M}_3$ as shown in Figure \ref{fig:m4}. They are designed with the following properties

\begin{itemize}

\item $\mathcal{M}_1$ is designed such that when player 1 and 2 both observe I after the first round, then in a winning strategy they must both play the initial configuration (i.e. $q_0$) in order to maintain $\neg p$. If they don't, then player 3 has a counter-strategy that takes the play to $\mathcal{M}_1$.

\item $\mathcal{M}_2$ is designed such that when player 1 and 2 both observe I at the same time, then in a winning strategy they must both play the same sequence of symbols after observing I ($*$ stands for any action and $(*,*)$ means any action pair where the two actions are equal). If there is a number $r>1$ so they don't comply with this when observing I after round $r$, then player 3 has a counter-strategy that takes the play to $\mathcal{M}_2$ after round $r$.

\item $\mathcal{M}_3$ is designed such that if player 1 observes I in the round before player 2 does, then in a winning strategy they must player configurations $C_1$ and $C_2$ respectively such that $C_1 \vdash_T C_2$ where $\vdash_T$ is the successor relation for configurations of $T$. Due to space limitations, the specific design of this module is omitted here.

\end{itemize}

Now, suppose $T$ has a repeated configuration. Then player 1 and 2 have a winning strategy $\sigma$ that consists in both players playing the $j$th configuration of the run of $T$ when observing I after the $j$th round. This strategy is winning because no matter if player 3 chooses to go to module $\mathcal{M}_1, \mathcal{M}_2, \mathcal{M}_3$  or none of them, then $\neg p$ will always hold given how they are designed when player 1 and 2 play according to $\sigma$. Next, the sequence of configurations in the run of $T$ is of the form $\pi \cdot \tau^\omega$ where $\pi$ and $\tau$ are finite sequences of configurations since $T$ has a repeated configuration. Then, player 1 and 2 only need finite memory to play according to $\sigma$ since they only need to remember a finite number of configurations and how far on the periodic path $\pi \cdot \tau^\omega$ the play is. Thus, they have a finite-memory winning strategy.

Suppose on the other hand that $T$ does not have a repeated configuration and assume for contradiction that player 1 and 2 have a $k$-memory winning strategy $\sigma$ for some $k$. Since player 1 and 2 cannot see whether the play is in $\mathcal{M}_1, \mathcal{M}_2$ or $\mathcal{M}_3$ player 1 must, when playing according to $\sigma$, play the first configuration $D_1$ of the run of $T$ when observing I after the first round. Otherwise, player 3 has a counter-strategy taking the play to $\mathcal{M}_1$ after the first round. Then, player 2 must play the second configuration $D_2$ of the run of $T$ when observing I after the second round. Otherwise, player 3 has a counter-strategy taking the play to $\mathcal{M}_3$ after the first round since player 1 must play $D_1$ when observing I after the first round and player 2 must play a successor configuration of what player 1 plays. Next, when using $\sigma$, player 1 must play $D_2$ when observing I after the second round. Otherwise, player 3 has a counter-strategy that takes the play to $\mathcal{M}_2$ after the second round since player 2 plays $D_2$ when observing I after the second round. Repeating this argument, it can be seen that $\sigma$ must consist of player 1 and 2 playing the $j$th configuration of the run of $T$ when observing I after the $j$th round for all $j \ge 1$. However, this is not possible for a $k$-memory strategy when the run of $T$ does not have a repeated configuration. This is because the current memory value of the DFST representing the strategy at the point when I is observed determines which sequence of symbols the strategy will play (since it will receive the same input symbol for the rest of the game). Thus, it is not capable of playing more than $k$ different configurations. And since for any $k$ a winning strategy must be able to play more than $k$ different configurations there is a contradiction and a finite-memory winning strategy therefore cannot exist.

In conclusion $\mathcal{M}_T, s_0 \models_{iF} \llangle \{ 1,2 \} \rrangle \textbf{G} \neg p$ if and only if $T$ repeats some configuration twice, which means that the model-checking problem is undecidable for $ATL$ and $ATL^*$ with $iF$ semantics. This game also illustrates that infinite memory is needed in some games, since player 1 and 2 can win the game with perfect recall strategies when $T$ does not have a repeated configuration. This is simply done by playing the sequence of configurations of the run of $T$.

\section{Concluding Remarks}

We have motivated the extension of the alternating-time temporal logics $ATL/ATL^*$ with bounded-memory and finite-memory semantics and have explored the expressiveness for both complete and incomplete information games. Both finite-memory semantics and the infinite hierarchy of bounded-memory semantics were shown to be different from memoryless and perfect recall semantics. We have also obtained complexity and decidability results for the model-checking problems that emerged from the newly introduced semantics. In particular, the model-checking results for bounded-memory semantics were positive with as low a complexity as for memoryless semantics for $ATL/ATL^*$ and complete/incomplete information games. Unfortunately model-checking with finite-memory semantics was shown to be as hard as with perfect recall semantics in the cases considered, even though it was shown to be a different problem.

\bibliographystyle{eptcs}
\bibliography{generic}
\end{document}